\newtheorem{theorem}{Theorem}
\numberwithin{equation}{section}
\numberwithin{theorem}{section}
\numberwithin{table}{section}
\newtheorem{corollary}{Corollary}
\numberwithin{remark}{section}
\numberwithin{corollary}{section}
\numberwithin{definition}{section}
\newcommand{\Var}{\mathrm{Var}}
\newcommand{\Cov}{\mathrm{Cov}}
\renewcommand\section{\@startsection {section}{1}{\z@}%
                                   {-3.5ex \@plus -1ex \@minus -.2ex}%
                                   {2.3ex \@plus.2ex}%
                                   {\normalfont\large\bfseries}}
\begin{document}
\doublespace
\vspace{-0.3in}
\title[]{Jackknife Empirical Likelihood  Ratio Test for independence between a continuous and a categorical random variable  }
\author[]%
{   S\lowercase{aparya} S\lowercase{uresh\textsuperscript{a} and } S\lowercase{udheesh} K. K\lowercase{attumannil\textsuperscript{b}  }
\\
 \lowercase{\textsuperscript{a}}I\lowercase{ndian} I\lowercase{nstitute of } M\lowercase{anagement}, K\lowercase{ozhikode}, I\lowercase{ndia},\\ \lowercase{\textsuperscript{b}}I\lowercase{ndian} S\lowercase{tatistical} I\lowercase{nstitute},
  C\lowercase{hennai}, I\lowercase{ndia}.}
\thanks{{$^{\dag}$}{Corresponding author E-mail: \tt skkattu@isichennai.res.in}}
\maketitle
\vspace{-0.2in}

\begin{abstract}
The categorical Gini covariance is a measure of dependence between a numerical variable and a categorical variable, quantifying the difference between conditional and unconditional distribution functions. The categorical Gini covariance equals zero if and only if the numerical variable and the categorical variable are independent. Inspired by this property, we propose a non-parametric test to assess the independence between a numerical and categorical variable using a modified version of the categorical Gini covariance. We used the theory of $U$-statistics to find the test statistics and study the properties. The proposed test has an asymptotic normal distribution under both the null and alternative hypotheses.   Since implementing a normal-based test is difficult,  we developed a jackknife empirical likelihood (JEL) ratio test for testing independence. Monte Carlo simulation studies are performed to validate the performance of the proposed JEL ratio test. We illustrate the test procedure using two real data sets.
\\\it{Key words}: Categorical Gini covariance; Jackknife empirical likelihood; $U$-statistics; Test for independence.
\end{abstract}

\section{Introduction}\vspace{-.1in}
Measuring the strength of associations or correlations between two variables is important in many scenarios. Many studies look at various methods to capture the correlations for two numerical variables (Gibbons, 1993; Mari and Kotz, 2001). While the popular Pearson correlation measures the linear relationship, Spearman's $\rho$ and Kendall's $\tau$ explore monotonic relationships between two numerical variables. Similarly, different measures capture the associations between numerical or ordinal random variables (Shevlyakov and Smirnov, 2001). These correlation metrics, however, cannot be applied directly to a categorical variable. Moreover, for most of these proposed measures, while independence between $X$ and $Y$ implies a correlation of zero, the converse does not hold. Many approaches were suggested by various researchers, like Beknazaryan et al. (2019) and Dang et al. (2019), for measuring the dependence between categorical and continuous random variables. However, they lacked computational flexibility.

The categorical Gini correlation, introduced by Dang et al. (2021), is a measure of dependence between a continuous random variable $X$ and a categorical random variable $Y$. Consider a continuous random vector $X$ with a distribution $F$ in $\mathbb{R}^d$. Let $Y$ be a categorical random variable taking values $y_1,\cdots,y_K$ and its distribution given by $P_y$ such that $P(Y=y_k) =p_k >0$ for $k=1,2,\cdots,K$. Let $F_k$ be the conditional distribution of $X$, given that $Y=y_k$. When the conditional distribution of $X$ given $Y$ is the same as the marginal distribution of $X$, then $X$ and $Y$ are independent. Otherwise, they are dependent. The categorical Gini covariance and the correlation look at how much the marginal and conditional distributions differ from each other to measure dependence between $X$ and $Y$.

Let $\Psi_k$ and $\Psi$ be the characteristic functions of $F_k$ and $F$, respectively. We define the $L_2$ distance between $\Psi_k$ and $\Psi$ as
$$ T(F_k,F) = c(d) \int_{\mathbf{R}^d} \frac{|\Psi_k(t) -\Psi(t)|^2}{||t||^{d+1}} dt, $$
where $c(d) = \Gamma((d+1)/2) / \pi^{\frac{(d+1)}{2}}$. Then the Gini covariance between $X$ and $Y$ is defined as

$$gCov(X,Y)= \sum_{k=1}^{K} p_k T(F_k,F). $$

The Gini covariance measures the dependence of $X$ and $Y$ by quantifying the difference between the conditional and unconditional characteristic functions. The corresponding Gini correlations standardize the Gini covariance to have a range in $[0,1]$. When $d=1$, the categorical Gini covariance and correlation between $X$ and $Y$ can be defined by
\begin{equation}\label{gcov}
    gCov ( X,Y)= \sum_{k=1}^{K} p_k \int_{\mathbb{R}} (F_k(x)-F(x))^2 dx
\end{equation}
\begin{equation*}\label{gcor}
    \rho_g(X,Y)=  \frac{\sum_{k=1}^{K} p_k \int_{\mathbb{R}} (F_k(x)-F(x))^2 dx}{\int_{\mathbb{R} }(F(x) -F(x))^2dx}.
\end{equation*}
The Gini covariance is the weighted squared distance between the marginal and conditional distribution. It has been shown that $\rho_g$ has a lower computation cost. It is more straightforward to perform statistical inference and a more robust measure even with unbalanced data than the popular distance correlation measures (Dang et al., 2021). Hewage and Sang (2024) developed the jackknife empirical likelihood (JEL) method to identify the confidence interval of the categorical Gini correlation. Sang et al. (2021) demonstrated how to use the categorical Gini correlation to develop a test for the equality of $K$ distributions.

In this paper, we propose a non-parametric test to assess the independence between a numerical variable and a categorical variable, leveraging a modified categorical Gini covariance and the framework of $U$-statistics. Although other tests exist for assessing the independence of categorical and continuous random variables, this study represents the first attempt to utilize the categorical Gini correlation specifically for testing their independence. We organize the rest of the paper as follows: In Section 2, we develop a $U$-statistics-based test and study the asymptotic properties of the same. We also develop a jackknife empirical likelihood (JEL) ratio test for testing independence. Section 3 outlines the Monte Carlo simulations conducted to evaluate the performance of the proposed JEL ratio test. In Section 4, we illustrate the test procedure using two real data sets, and finally, we conclude in Section 5.
\vspace{-0.2in}
\section{Test Statistic}\vspace{-0.1in}
Consider a continuous random variable $X$ with a distribution function $F$ in $\mathbb{R}$. Denote $\Bar{F}(x)=1-F(x)$ as the survival function of $X$. Let $Y$ be a categorical random variable taking values $y_1,\cdots,y_K$ with probability mass function $p_k=P(Y=y_k),\,k=1,2,\cdots K$. Also, let $F_k$ be the conditional distribution of $X$ given $Y=y_k$, that is, $F_k(x) = P(X\leq x| Y=y_k)$. Denote $\bar{F}_k(x) = P(X> x| Y= y_k)$. Based on the observed data, we are interested in testing the null hypothesis: \vspace{-0.2in}
  $$H_0:X \text{ and }  Y \text{ are independent}$$
 against the alternative hypothesis\vspace{-0.2in}
 $$H_1:X \text{ and }  Y \text{ are dependent}.$$
It is important to note that the alternative hypothesis considered here is quite general, as $H_1$ only indicates a lack of independence and does not assume any specific form of dependence between $X$ and $Y$ (See, for example, Sudheesh et al. (2025)).

We define a departure measure $\Delta$ from the null hypothesis $H_0$ towards the alternative hypothesis $H_1$ as follows:\vspace{-0.1in}
\begin{equation}\label{delta}
   \Delta=  \sum_{k=1}^{K} p_k \int_{\mathbb{R}} (F_k(x)-F(x))^2 dF(x).
\end{equation}

The measure $\Delta$ is a weighted average of the squared distance between the marginal distribution and the conditional distribution. We call $\Delta$ defined in (\ref{delta}) as modified Gini covariance. When $X$ and $Y$ are independent, it is evident that $F_k$ and $F$ become equal, resulting in $\Delta=0$. Observe that each term of $\Delta$ is proportional to the $L^2$ distance between marginal and conditional distributions. Reformulating the expression for $\Delta$, it can be shown that $\Delta = 0$ implies the independence between $X$ and $Y$(Dang et al., 2021).

To find the test statistics, we simplify $\Delta$ as follows:
\begin{eqnarray}\nonumber \label{gnumer}
    \Delta &=& \sum_{k=1}^K p_k \int_{\mathbb{R}}(F_k(x)-F(x))^2 dF(x)\\ \nonumber
    &=& \sum_{k=1}^K p_k \int_{\mathbb{R}}\left(-\Bar{F}_k(x)+\Bar{F}(x)\right)^2 dF(x)\\\nonumber
    &=& \sum_{k=1}^K p_k \int_{\mathbb{R}}\left(\Bar{F}^2_k(x)+\Bar{F}(x)^2-2\Bar{F}_k(x)\Bar{F}(x)\right) dF(x)\\ \nonumber
&=& \sum_{k=1}^K p_k \int_{\mathbb{R}}\Bar{F}^2_k(x)dF(x)+ \sum_{k=1}^K p_k \int_{\mathbb{R}}\Bar{F}^2(x)dF(x)\\ \nonumber
&&-2 \sum_{k=1}^K p_k \int_{\mathbb{R}}\Bar{F}_k(x)\Bar{F}(x)dF(x)\\ \nonumber
&=& \sum_{k=1}^K p_k \int_{\mathbb{R}}\Bar{F}^2_k(x)dF(x)+ \sum_{k=1}^K p_k \int_{\mathbb{R}}\Bar{F}^2(x)dF(x)\\ \nonumber
&&-2  \int_{\mathbb{R}}\left(\sum_{k=1}^K p_k\Bar{F}_k(x)\right)\Bar{F}(x)dF(x)\\ \nonumber
&=& \sum_{k=1}^K p_k \int_{\mathbb{R}}\Bar{F}^2_k(x)dF(x)+ \sum_{k=1}^K p_k \int_{\mathbb{R}}\Bar{F}^2(x)dF(x)\\ \nonumber
&&-2  \int_{\mathbb{R}}\left(\sum_{k=1}^K P(X>x,Y=k)\right)\Bar{F}(x)dF(x). \nonumber
\end{eqnarray}

The $\Delta$ further simplifies as
\begin{eqnarray}\nonumber \label{gnumer}
    \Delta  &=& \sum_{k=1}^K p_k \int_{\mathbb{R}}\Bar{F}^2_k(x)dF(x)+ \int_{\mathbb{R}}\Bar{F}(x)^2dF(x)-2\int_{\mathbb{R}}\Bar{F}^2(x)dF(x)\\ \nonumber
 &=& \sum_{k=1}^{K}\frac{1}{p_k} \int_{\mathbb{R}}p_k^2\Bar{F}^2_k(x)dF(x)- \int_{\mathbb{R}}\Bar{F}^2(x)dF(x)\\
 &=& \sum_{k=1}^{K}\frac{1}{p_k}   \int_{\mathbb{R}} P(X>x, Y=k)^2 dF(x)- \frac{1}{3}\nonumber
 \\ &=& \sum_{k=1}^{K}\frac{1}{p_k}   \int_{\mathbb{R}} P(X_1>x, Y_1=k)P(X_2>x, Y_2=k) dF(x)- \frac{1}{3}\nonumber
 \\&=& \sum_{k=1}^{K}\frac{1}{p_k}   P(\min(X_1,X_2)>X_3, Y_1=Y_2=k)- \frac{1}{3}.
\end{eqnarray}

Note that if $X$ and $Y$ are independent, then equation \eqref{gnumer} becomes
\begin{eqnarray} \nonumber
   \Delta &=& \sum_{k=1}^{K}\frac{1}{p_k}   \int_{\mathbb{R}} P(X>x, Y=k)^2 dF(x)- \int_{\mathbb{R}}\Bar{F}^2(x)dF(x)\\ \nonumber
    &=& \sum_{k=1}^{K}\frac{1}{p_k}   \int_{\mathbb{R}}\left( p_k P(X>x)\right )^2 dF(x)- \int_{\mathbb{R}}\Bar{F}^2(x)dF(x)\\ \nonumber
    &=& \sum_{k=1}^{K} p_k   \int_{\mathbb{R}}\left(P(X>x)\right )^2 dF(x)- \int_{\mathbb{R}}\Bar{F}^2(x)dx\\ \nonumber
    &=&   \int_{\mathbb{R}}\left(P(X>x)\right )^2 dF(x)- \int_{\mathbb{R}}\Bar{F}^2(x)dF(x)=0.
\end{eqnarray}
Thus, under $H_0$, $\Delta$ is zero, and it is positive under the alternative $H_1$.

In (\ref{gnumer}), we represent $\Delta$ as a sum of probabilities. This enables us to use the theory of $U$-statistics to develop the test.
Let $(X_1,Y_1), (X_2,Y_2)$, and $(X_3,Y_3)$ be independent and identical copies of $(X,Y)$.
Define,
\begin{equation*}
    h_k^*((X_1,Y_1), (X_2,Y_2), (X_3,Y_3)) = I(\min(X_1,X_2) >X_3 , Y_1=Y_2 = k),
\end{equation*} where $I$ denotes the indicator function. Then $\Delta$ can be written as
\begin{equation*}
    \Delta = \sum_{k=1}^{K} \frac{1}{p_k} E(h_k^*((X_1,Y_1), (X_2,Y_2), (X_3,Y_3)))-\frac{1}{3}.
\end{equation*}
Let $(X_1,Y_1),\dots, (X_n,Y_n)$ be the observed data. An estimator of $\Delta$ is given by
\begin{small}
    \begin{eqnarray}\nonumber
        \widehat{\Delta}&=& \frac{6}{n(n-1)(n-2)}\sum_{k=1}^{K} \frac{1}{\widehat{p}_k}\sum_{i=1}^{n-2}\sum_{j=i+1}^{n-1}\sum_{l=j+1}^{n}h_k((X_i,Y_i), (X_j,Y_j), (X_l,Y_l))-\frac{1}{3},
    \end{eqnarray}
\end{small}where $h_k(.)$ is the symmetric version of the kernel $h_k^*(.)$ and $\widehat{p}_k$ is estimated by the following equation
$$\widehat{p}_k = \frac{1}{n} \sum_{i=1}^{n} I(Y_i =k).$$
    For convenience, we represent $\widehat \Delta$ as
    \begin{eqnarray}\nonumber
        \widehat{\Delta}&=& \sum_{k=1}^{K} \frac{1}{\widehat{p}_k}\frac{6}{n(n-1)(n-2)}\sum_{i=1}^{n-2}\sum_{j=i+1}^{n-1}\sum_{l=j+1}^{n}h_k((X_i,Y_i), (X_j,Y_j), (X_l,Y_l))-\frac{1}{3}\\
       &=& \sum_{k=1}^{K} \frac{1}{\widehat{p}_k}\widehat{\Delta}_k-\frac{1}{3}.
    \end{eqnarray}
The test procedure is to reject the null hypothesis $H_0$ against the alternative $H_1$ for a large value of $\widehat{\Delta}$.

Next, we study the asymptotic properties of the test statistic. By the law of large numbers, $\widehat{p}_k$ is a consistent estimator of $p_k$. The asymptotic properties of $U$-statistics ensure that $\widehat{\Delta}_k$ is a consistent estimator of $E(\widehat{\Delta}_k)$ (Sen, 1960). Hence, it is easy to verify that $\widehat{\Delta}$ is a consistent estimator of $\Delta$. Next, we determine the asymptotic distribution of $\widehat{\Delta}$.
 \begin{theorem}
     As $n\rightarrow \infty$, $\sqrt{n}(\widehat{\Delta}-\Delta)$ converges in distribution to a normal random variable with mean zero and variance $\sigma^2$ where
   $$\sigma^2 = \mathbf{a^T}\Sigma\mathbf{a},$$
    where $\mathbf{a}^T = (\frac{1}{p_1},\frac{1}{p_2},\cdots,\frac{1}{p_K})$ and $\Sigma = ((\sigma_{kl}))_{K\times K} $  with
    $$ \sigma_{kk}=9 \Var(E(h_k((X_1,Y_1), (X_2,Y_2), (X_3,Y_3)))|X_1,Y_1)),\, k=1,2,\cdots,K$$
     and
     \begin{eqnarray*}
         \sigma_{kl}&=&9 \Cov (h_k((X_1,Y_1), (X_2,Y_2), (X_3,Y_3)),h_l((X_1,Y_1), (X_4,Y_4), (X_5,Y_5))),\\&&\quad\,\text{ if }\, k \neq l,\,
     k,l =1,2,\cdots, K.
     \end{eqnarray*}
\end{theorem}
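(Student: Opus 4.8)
The plan is to write $\widehat{\Delta}$ as a smooth function of a vector of $U$-statistics and then combine a joint central limit theorem with the multivariate delta method. Put $\theta_k = E(h_k((X_1,Y_1),(X_2,Y_2),(X_3,Y_3)))$, so that $\widehat{\Delta}_k$ estimates $\theta_k$ and $\Delta = g(\theta_1,\dots,\theta_K,p_1,\dots,p_K)$ with $g(u_1,\dots,u_K,q_1,\dots,q_K)=\sum_{k=1}^{K} u_k/q_k - 1/3$, while $\widehat{\Delta}=g(\widehat{\Delta}_1,\dots,\widehat{\Delta}_K,\widehat{p}_1,\dots,\widehat{p}_K)$. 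Since $p_k>0$ for every $k$, the map $g$ is continuously differentiable in a neighbourhood of the true parameter, so it suffices to establish the joint asymptotic normality of the estimating vector and transport it through $g$.

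First I would establish the joint limit of $\sqrt{n}(\widehat{\Delta}_1-\theta_1,\dots,\widehat{\Delta}_K-\theta_K,\widehat{p}_1-p_1,\dots,\widehat{p}_K-p_K)$. Each $\widehat{\Delta}_k$ is a degree-three $U$-statistic with bounded indicator kernel $h_k$, and each $\widehat{p}_k$ is a sample mean, i.e. a degree-one $U$-statistic; stacking them yields a vector of $U$-statistics based on the same i.i.d. sample, to which the multivariate central limit theorem for $U$-statistics applies (Sen, 1960). Because the kernels are bounded, all moment conditions hold automatically, and only the non-degeneracy of the first Hájek projections needs checking. Writing $\psi_k(x,y)=E(h_k((X_1,Y_1),(X_2,Y_2),(X_3,Y_3))\mid (X_1,Y_1)=(x,y))$, the first-order projection of $\sqrt{n}(\widehat{\Delta}_k-\theta_k)$ is $3(\psi_k(X_1,Y_1)-\theta_k)$ and that of $\sqrt{n}(\widehat{p}_k-p_k)$ is $I(Y_1=k)-p_k$. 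The factor $3$, hence the factor $9$ in $\Sigma$, is the degree of the kernel, and since the limiting covariance of two $U$-statistics equals the covariance of their projections, $\sigma_{kl}=9\,\Cov(\psi_k(X_1,Y_1),\psi_l(X_1,Y_1))$, which is exactly the stated covariance of $h_k$ and $h_l$ sharing only their first argument.

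Next I would apply the delta method. The partial derivatives of $g$ at the true parameter are $\partial g/\partial u_k = 1/p_k$ and $\partial g/\partial q_k = -\theta_k/p_k^2$, so the influence function of $\sqrt{n}(\widehat{\Delta}-\Delta)$ is
\begin{equation*}
 \sum_{k=1}^{K} \frac{3}{p_k}\bigl(\psi_k(X_1,Y_1)-\theta_k\bigr) - \sum_{k=1}^{K} \frac{\theta_k}{p_k^2}\bigl(I(Y_1=k)-p_k\bigr).
\end{equation*}
The variance of the first sum is precisely $\mathbf{a}^{T}\Sigma\mathbf{a}$ with $\mathbf{a}^{T}=(1/p_1,\dots,1/p_K)$, so the remaining task is to show that the second sum, which carries the sampling variability of the estimated weights $\widehat{p}_k$, is asymptotically negligible. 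Here I would use the identity $\sum_{k=1}^{K} I(Y_1=k)=1$ almost surely, whence $\sum_{k=1}^{K}(I(Y_1=k)-p_k)=0$. Under $H_0$ the coefficients $\theta_k/p_k^2=\int_{\mathbb{R}}\bar{F}^2(x)\,dF(x)=1/3$ do not depend on $k$, so the second sum collapses to $\tfrac{1}{3}\sum_{k=1}^{K}(I(Y_1=k)-p_k)=0$. Consequently the estimation of the weights drops out to leading order and the influence function reduces to the first sum, giving $\sqrt{n}(\widehat{\Delta}-\Delta)\to N(0,\mathbf{a}^{T}\Sigma\mathbf{a})$.

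The main obstacle I anticipate is the bookkeeping for the symmetric degree-three kernel when computing $\psi_k$: one must track which of the three permutations places the conditioning pair and verify that the constituent terms reproduce the covariance structure recorded in $\Sigma$. The indicator form keeps every integral finite and explicit, so this step is laborious rather than deep. The one genuinely delicate point is the cancellation of the $\widehat{p}_k$ terms, which is what makes the clean quadratic form $\mathbf{a}^{T}\Sigma\mathbf{a}$ the correct variance; I would flag explicitly that this cancellation rests on $\theta_k/p_k^2$ being constant in $k$, an identity that holds exactly under independence, so that the limiting variance used to calibrate the test is the stated $\mathbf{a}^{T}\Sigma\mathbf{a}$.
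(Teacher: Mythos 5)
Your proof is correct, and its key step is genuinely different from the paper's. The paper also rests on the multivariate CLT for the degree-three $U$-statistics $\widehat{\Delta}_k$ and then takes the linear combination with $\mathbf{a}^T=(1/p_1,\dots,1/p_K)$; but it disposes of the estimated weights in one line, asserting that since $\widehat{p}_k\to p_k$, Slutsky's theorem makes $\widehat{\Delta}=\sum_k\widehat{\Delta}_k/\widehat{p}_k-1/3$ asymptotically equivalent to $\widehat{\Delta}^*=\sum_k\widehat{\Delta}_k/p_k-1/3$. You instead stack $(\widehat{\Delta}_1,\dots,\widehat{\Delta}_K,\widehat{p}_1,\dots,\widehat{p}_K)$ as a single vector of $U$-statistics and push it through the delta method, which forces you to carry the weight-estimation term $-\sum_k(\theta_k/p_k^2)\bigl(I(Y_1=k)-p_k\bigr)$ and to prove it vanishes. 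This is the better route, because the paper's Slutsky step is not actually valid at the $\sqrt{n}$ scale: $\sqrt{n}\,(\widehat{\Delta}-\widehat{\Delta}^*)=\sum_k\sqrt{n}\,(1/\widehat{p}_k-1/p_k)\,\widehat{\Delta}_k$ converges to $-\sum_k(\Delta_k/p_k^2)Z_k$, where $(Z_1,\dots,Z_K)$ is the joint normal limit of $\sqrt{n}(\widehat{p}_k-p_k)$ and $\Delta_k=E(\widehat{\Delta}_k)$ is your $\theta_k$; this limit is nonzero in general. Your observation that the cancellation requires $\theta_k/p_k^2$ to be constant in $k$ --- true under $H_0$, where $\theta_k=p_k^2/3$, but not under a generic alternative --- is exactly right, and it shows that the stated variance $\mathbf{a}^T\Sigma\mathbf{a}$ (and the paper's own proof) is justified only when that identity holds; under $H_1$ the asymptotic variance acquires additional delta-method contributions from the estimated weights, together with their cross-covariances with the $\widehat{\Delta}_k$'s. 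Since everything the paper subsequently uses (Corollary 2.1 and the JEL Wilks-type theorem) is calibrated under $H_0$, your argument covers all downstream applications, and it does so with an explicit influence function rather than an appeal to asymptotic equivalence that, as you in effect demonstrate, fails in general.
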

\begin{proof}Denote
$$\widehat{\Delta}^*=\sum_{k=1}^{K} \frac{1}{{p}_k}\widehat{\Delta}_k-\frac{1}{3}.$$
Observe that $\widehat{\Delta}_k$, $k=1,2,\ldots,K$, are $U$-statistics with kernel $h_k$ of degree 3. Also, $E(\widehat{\Delta}_k)=P(\min(X_1,X_2) >X_3, Y_1=Y_2 = k)=\Delta_k. $ Since $\widehat{\Delta}^*$ is a linear combination of $U$-statistics, $E(\widehat{\Delta}^*)=\Delta$.

Using the central limit theorem for  $U$-statistics (Lee, 2019), it follows that as $n \rightarrow \infty$, $\sqrt{n}(\widehat{\Delta}_k - \Delta_k)$ converges in distribution to a normal random variables with mean $0$ and variance $\sigma_{kk}$, where
\begin{equation}\label{var1}
    \sigma_{kk}=9 \Var(E(h_k((X_1,Y_1), (X_2,Y_2), (X_3,Y_3)))|X_1,Y_1)).
\end{equation}
By the weak law of large numbers, $\widehat p_k$ is a consistent estimator of $p_k$, and hence by Slutsky's theorem, the asymptotic distribution of  $\widehat{\Delta}^*$  is same as $\widehat{\Delta}$.
Hence, the asymptotic distribution of $\widehat{\mathbf{\Delta}}=(\sqrt{n}(\widehat{\Delta}_1 - {\Delta}_1),\ldots,\sqrt{n}(\widehat{\Delta}_K - {\Delta}_K))^T$ is multivariate normal with mean vector zero and variance-covariance matrix $\Sigma=((\sigma_{kl}))_{K\times K}$, where $\sigma_{kk}$ is given as in equation \eqref{var1} and $\sigma_{kl}$ is given by
\begin{small}
  \begin{eqnarray} \label{eq:sigkl}
    \sigma_{kl}=9 \Cov (h_k((X_1,Y_1), (X_2,Y_2), (X_3,Y_3)),h_l((X_1,Y_1), (X_4,Y_4), (X_5,Y_5))).
\end{eqnarray}
\end{small}
If a random vector $\mathbf{Z}$ follows a $p$-variate normal distribution with vector $\mathbf{\mu}$ and variance-covariance matrix $\Sigma$, then for any vector $\mathbf{a} \in \mathbb{R}^{p}$, the asymptotic distribution of $\mathbf{a}^{T}\mathbf{Z}$ is a univariate normal with mean $\mathbf{a}^{T}\mathbf{\mu}$ and variance $\mathbf{a}^T \Sigma \mathbf{a}$. Therefore, with the selection of $\mathbf{a}=(\frac{1}{p_1},\frac{1}{p_2},\cdots,\frac{1}{p_K})^T$, we have the desired result.
\end{proof}

Under the null hypothesis $H_0$, $\Delta=0 $ and $\sigma_{kk}$ and $\sigma_{kl}$ given in equations \eqref{var1} and \eqref{eq:sigkl} can be expressed as
\begin{equation} \label{eq:h0skk}
    \sigma_{kk}=\frac{p_k^3}{2}- \frac{31}{45}p_k^4
\end{equation}
and
\begin{equation} \label{eq:h0sigkl}
\sigma_{kl}=-\frac{31}{45}p_k^2 p_l^2.
\end{equation}
The proof for obtaining the expressions (\ref{eq:h0skk}) and (\ref{eq:h0sigkl}) is given in the appendix.
\begin{corollary}\label{cor1}
 Under $H_0$, as $n\rightarrow \infty$, $\sqrt{n}\widehat{\Delta}$ converges in distribution to a normal random variable with mean zero and variance $\sigma_0^2= \mathbf{a^T}\Sigma\mathbf{a}, $ where $\mathbf{a}^T = (\frac{1}{p_1},\frac{1}{p_2},\cdots,\frac{1}{p_K}) $ and $\Sigma = ((\sigma_{kl}))_{K\times K} $ with $ \sigma_{kk} $  and    $\sigma_{kl}$ are given as in equations \eqref{eq:h0skk} and \eqref{eq:h0sigkl}, respectively.
\end{corollary}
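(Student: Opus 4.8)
The plan is to deduce Corollary~\ref{cor1} from the Theorem by specializing to the null hypothesis. Under $H_0$ the variables $X$ and $Y$ are independent, and the computation displayed just before the Theorem already shows that this forces $\Delta=0$; hence $\sqrt{n}(\widehat{\Delta}-\Delta)=\sqrt{n}\widehat{\Delta}$. The Theorem therefore gives at once that $\sqrt{n}\widehat{\Delta}$ is asymptotically normal with mean zero and variance $\sigma_0^2=\mathbf{a}^T\Sigma\mathbf{a}$, with $\mathbf{a}=(1/p_1,\dots,1/p_K)^T$. All that remains is to evaluate the entries of $\Sigma$ under independence, that is, to verify \eqref{eq:h0skk} and \eqref{eq:h0sigkl}.

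For the diagonal entries I would begin from \eqref{var1}, $\sigma_{kk}=9\Var(g_{1k})$ with $g_{1k}(x_1,y_1)=E(h_k((X_1,Y_1),(X_2,Y_2),(X_3,Y_3))\mid X_1=x_1,Y_1=y_1)$ the first-order H\'{a}jek projection of the symmetrized degree-three kernel. Writing $h_k$ as the average of its three representatives --- in each of which one argument serves as the comparison threshold while the other two carry the label $k$ --- I would condition on $(X_1,Y_1)$ and integrate out the two independent copies. Under $H_0$ each joint probability factors as $P(X>x,Y=k)=p_k\bar F(x)$, so every term collapses to an integral of a polynomial in $\bar F$ against $dF$; the probability integral transform $U=F(X_1)\sim\mathrm{Unif}(0,1)$ then reduces these to elementary moments $\int_0^1 u^m\,du$. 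Carrying out $9\Var(g_{1k})$ should produce \eqref{eq:h0skk}, and the analogous evaluation of $9\Cov(g_{1k},g_{1l})$ in \eqref{eq:sigkl} --- in which the product $I(Y_1=k)I(Y_1=l)$ vanishes for $k\neq l$ --- should produce \eqref{eq:h0sigkl}. Substituting these into $\sigma_0^2=\mathbf{a}^T\Sigma\mathbf{a}$ finishes the argument.

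The step I expect to be most delicate is not any single integral but the verification that the aggregated statistic remains non-degenerate under $H_0$. Because $\widehat{\Delta}$ combines the per-category $U$-statistics with the weights $1/p_k$ and because $\sum_{k}I(Y_1=y_k)=1$ almost surely, the leading projections $\sum_k p_k^{-1}g_{1k}$ are prone to cancellation, and one must check that the resulting quadratic form $\mathbf{a}^T\Sigma\mathbf{a}$ is strictly positive: a vanishing first-order projection would force $\sqrt{n}\widehat{\Delta}\to 0$ and replace the normal limit by a degenerate $U$-statistic (weighted chi-square) limit. A related subtlety is justifying the replacement of $\widehat{p}_k$ by $p_k$ via Slutsky's theorem, which is transparent only when the $\sqrt{n}$-scaled statistic is non-degenerate. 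Establishing positivity of $\sigma_0^2$ --- equivalently, non-degeneracy of the leading projection under independence --- is thus the crux on which the statement rests.
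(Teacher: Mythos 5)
Your high-level route is the same as the paper's: read Corollary \ref{cor1} off from Theorem 2.1 after noting that $\Delta=0$ under $H_0$, then evaluate the entries of $\Sigma$ under independence via the factorization $P(X>x,Y=k)=p_k\bar F(x)$ and reduction to uniform moments, which is precisely what the paper's appendix does. The genuine gap is that you leave both decisive steps unexecuted --- the evaluation of $\sigma_{kk},\sigma_{kl}$ (``should produce'' \eqref{eq:h0skk} and \eqref{eq:h0sigkl}) and the positivity check you yourself call the crux --- and here that is fatal, because carrying out your own plan shows the corollary cannot hold as stated. Under $H_0$ the projection of the symmetrized kernel is
\begin{equation*}
g_{1k}(x,y)=E\bigl[h_k\mid X_1=x,\,Y_1=y\bigr]
=\tfrac{1}{3}\Bigl[2\,I(y=k)\,p_k\bigl(F(x)-\tfrac{1}{2}F(x)^2\bigr)+p_k^2\,\bar F(x)^2\Bigr],
\end{equation*}
so, using $\sum_k I(y=k)=1$ and $\sum_k p_k=1$,
\begin{equation*}
\sum_{k=1}^{K}\frac{1}{p_k}\,g_{1k}(x,y)
=\tfrac{1}{3}\bigl[2F(x)-F(x)^2+(1-F(x))^2\bigr]=\tfrac{1}{3}\quad\text{identically}
\end{equation*}
(this is the partition identity: given $X_1=x$, exactly one of $X_1,X_2,X_3$ is the minimum). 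Hence $\sigma_0^2=\mathbf{a}^T\Sigma\mathbf{a}=9\,\Var\bigl(\sum_k p_k^{-1}g_{1k}(X_1,Y_1)\bigr)=0$: the cancellation you flagged actually occurs, the aggregated $U$-statistic is first-order degenerate under $H_0$, $\sqrt n\,\widehat\Delta\to 0$ in probability, and the non-trivial limit lives at the scale $n\widehat\Delta$ (weighted chi-square type), exactly the degenerate alternative you described. So no argument can establish Corollary \ref{cor1} with the positive variance claimed in \eqref{eq:h0skk}--\eqref{eq:h0sigkl}.

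For completeness, here is why the paper's appendix reports a nonzero value, i.e., where its computation (and any attempt to reproduce \eqref{eq:h0skk}) goes wrong. First, the nine cross-covariances in the expansion \eqref{sigkk} are miscounted: classifying each representative of the two kernel copies by whether $(X_1,Y_1)$ enters as a labelled minimum argument or as the threshold gives $4A+4B+C$, not $3A+5B+C$ (and likewise $4A^*+4B^*+C^*$ for $\sigma_{kl}$). Second, in \eqref{eq:terma1} a square is pushed inside an integral: since $\{\min(X_1,X_2)>X_3\}$ and $\{\min(X_1,X_4)>X_5\}$ are conditionally independent given $X_1$,
\begin{equation*}
a_1=p_k^3\int_0^1\bigl(u-\tfrac{1}{2}u^2\bigr)^2\,du=\tfrac{2}{15}\,p_k^3,
\qquad\text{not}\qquad \tfrac{1}{6}\,p_k^3.
\end{equation*}
With both corrections, $\sigma_{kk}=\tfrac{8}{15}p_k^3(1-p_k)$ and $\sigma_{kl}=-\tfrac{8}{15}p_k^2p_l^2$, and the quadratic form telescopes to $\mathbf{a}^T\Sigma\mathbf{a}=\tfrac{8}{15}\bigl[\sum_k p_k(1-p_k)-\sum_{k\neq l}p_kp_l\bigr]=0$, confirming the degeneracy. (The paper is even internally inconsistent on this point: the coefficient $31/45$ in \eqref{eq:h0skk}--\eqref{eq:h0sigkl} disagrees with the $7/15$ derived in \eqref{eq:sigkk_fin}, and the former would give $\sigma_0^2=\tfrac{1}{2}-\tfrac{31}{45}<0$, impossible for a variance --- a red flag you could have used.) In short: your proposal is not a proof, and its deferred computations would not produce the claimed formulas; but the step you correctly identified as the crux is exactly the step on which the paper's own argument fails, and executing your plan faithfully refutes the corollary rather than proving it.
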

Using Corollary \ref{cor1}, we can obtain a normal-based critical region for the proposed test. Let $\widehat{\sigma}_0^2$ be a consistent estimator of $\sigma_0^2$. Then, we reject $H_0$ in favor of $H_1$ at significance level $\alpha$ if

$$\frac{\sqrt{n}\widehat{\Delta}}{\widehat{\sigma}_0}>Z_{\alpha},$$ where $Z_{\alpha}$ is the upper $\alpha$ percentile point of a standard normal distribution.

When the distributions of random variables are unknown, it is challenging to determine a consistent estimator for the null variance. $\sigma_0^2= \mathbf{a^T}\Sigma\mathbf{a},$ particularly when the number of categories is large. Hence, it is difficult to determine the critical region of the proposed test based on the asymptotic distribution of the test statistic. Consequently, a normal-based test is not recommended when the category variable involves many categories. Therefore, we propose a distribution-free test based on jackknife empirical likelihood to evaluate the proposed hypothesis of independence between $X$ and $Y$.
\vspace{-0.1in}
\subsection{JEL ratio test}
The concept of empirical likelihood was first proposed by Thomas and Grunkemier (1975) to determine the confidence interval for survival probabilities under right censoring. The work by Owen (1988, 1990) extended this concept of empirical likelihood to a general methodology. Empirical likelihood is calculated by maximizing the non-parametric likelihood function subject to certain constraints. When the constraints become nonlinear, evaluating the likelihood function becomes computationally challenging. Further, it becomes increasingly difficult as $n$ grows large. To overcome this problem, Jing et al. (2009) introduced the jackknife empirical likelihood method for finding the confidence interval of the desired parametric function. This method is highly popular among researchers because it combines the effectiveness of the likelihood approach with the jackknife technique.

Let $(X_1,Y_1),(X_2,Y_2), \cdots, (X_n,Y_n)$ be a random sample of size $n$ from a joint distribution function $F(x,y)$ of $X$ and $Y$. Recall the test statistic $\widehat{\Delta}$, given by \vspace{-0.1in}
\begin{eqnarray}\nonumber
        \widehat{\Delta}&=&\frac{6}{n(n-1)(n-2)}\sum_{k=1}^{K} \frac{1}{\widehat p_k}\sum_{i=1}^{n-2}\sum_{j=i+1}^{n-1}\sum_{l=j+1}^{n}h_k(X_i,X_j,X_l)-\frac{1}{3}\\ \nonumber
        &=& \sum_{k=1}^{K} \frac{1}{\widehat{p}_k}\widehat{\Delta}_{k}-\frac{1}{3}.
        \end{eqnarray}

 Let $\nu_i$ be the jackknife pseudo-values defined as follows:
$$\nu_i= n\widehat{\Delta}-(n-1) \widehat{\Delta}_{i}\,   i =1,\cdots,n,$$
 where $\widehat{\Delta}_{i}$ is the value of $\widehat{\Delta}$ obtained using $(n-1)$ observations $(X_1,Y_1),\cdots,$ $(X_{i-1},Y_{i-1}),(X_{i+1},Y_{i+1}), \cdots,(X_n,Y_n)$. And, we have $\widehat{\Delta}_{i} =\sum_{k=1}^{K} \frac{1}{\widehat{p}_k}\widehat{\Delta}^{i}_{k}-\frac{1}{3}$ where each $\widehat{\Delta}^{i}_{k}$ is calculated from the corresponding $n-1$ samples.

 Let ${q_i}, i=1,2,\ldots, n$ be the probability associated with each  $\nu_i$. Define JEL for departure measure $\Delta$ as
  \begin{equation*}
 JEL(\Delta)=\sup_{\bf q} \big(\prod_{i=1}^{n}{q_i};\,\, \sum_{i=1}^{n}{q}=1;\,\,\sum_{i=1}^{n}{q_i \nu_i}=0\big).
\end{equation*}
Note that $\prod_{i=1}^{n}{q_i}$ is maximized subject to the condition  $\sum_{i=1}^{n}{q_i}=1$ at $q_i = 1/n$.
Hence, using the Lagrange multiplier method, we obtain the jackknife empirical log-likelihood ratio as
\begin{equation}\label{likratio}
  \log R(\Delta)=-\sum_{i=1}^{n}\log\big(1+\lambda \nu_i\big),
\end{equation}
where $\lambda$ is the solution of
\begin{equation}\label{lambda2}
  \frac{1}{n}\sum_{i=1}^{n}{\frac{\nu_i}{1+\lambda \nu_i}}=0,
\end{equation}
    provided
\begin{equation*}
  \min_{{1\le i\le n}}\nu_i<\Delta<  \max_{1\le i \le n}\nu_i.
\end{equation*}
 Next, we obtain the asymptotic distribution of the jackknife empirical log-likelihood ratio as an analog of Wilk's theorem.
\begin{theorem} \label{thm_jel}
 Denote

 $$g(x,y)= E \left[ \sum_{k=1}^{K} \frac{1}{p_k}h_k((X_1,Y_1), (X_2,Y_2), (X_3,Y_3)) -\frac{1}{3}|X_1 =x ,Y_1=y\right]$$ and $\sigma_g^2= Var(g(X,Y))$. Suppose that  $E \left[ g(X,Y) \right] <\infty$ and $\sigma_{g}^{2}>0$. Under $H_0$, as $n\rightarrow \infty$, $-2\log R(0)$ converges in distribution to a $\chi^2$ random variable  with one degree of freedom.
\end{theorem}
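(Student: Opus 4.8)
The plan is to follow the jackknife empirical likelihood methodology of Jing et al.\ (2009), reducing the Wilks-type statement to a central limit theorem and a weak law of large numbers for the jackknife pseudo-values $\nu_i$, followed by the classical empirical likelihood Lagrange-multiplier expansion of (\ref{likratio})--(\ref{lambda2}).

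The first and most delicate step is to establish the asymptotic linear (H\'ajek) representation of the pseudo-values. Writing the combined kernel as $\phi=\sum_{k=1}^K \frac{1}{p_k}h_k-\frac13$, the function $g(x,y)$ in the statement is exactly its first-order projection, with $E[g(X,Y)]=\Delta$. Using the algebraic identity that the average of the leave-one-out $U$-statistics equals the full $U$-statistic, together with the consistency $\widehat p_k\to p_k$ (so that the re-estimation of the weights $1/\widehat p_k$ in each leave-one-out replicate perturbs $\widehat{\Delta}_{i}$ only at a negligible order), I would show
$$\nu_i=\Delta+3\big(g(X_i,Y_i)-\Delta\big)+r_i,\qquad \tfrac1n\sum_{i=1}^n\nu_i=\widehat{\Delta}+o_p(n^{-1/2}),$$
with a remainder $r_i$ that is uniformly negligible. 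I expect this to be the main obstacle: because the weights $1/\widehat p_k$ make $\widehat{\Delta}$ a smooth function of several $U$-statistics rather than a single one, the decomposition must be justified through a Hoeffding-decomposition / delta-method argument, and the effect of the leave-one-out perturbation of each $\widehat p_k$ on the pseudo-values must be controlled.

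Granting this representation, the remaining steps are routine. Under $H_0$ we have $\Delta=0$; by Corollary \ref{cor1}, together with the identification $\sigma_0^2=9\sigma_g^2$ (which follows from $\sigma_{kl}=9\Cov(g_k,g_l)$ after projecting the kernels $h_k,h_l$ onto their shared coordinate, whence $\mathbf{a}^T\Sigma\mathbf{a}=9\Var(g)$), we obtain
$$\frac{1}{\sqrt n}\sum_{i=1}^n\nu_i=\sqrt n\,\widehat{\Delta} \xrightarrow{d} N(0,9\sigma_g^2).$$
The same linear representation, via the weak law of large numbers applied to the leading independent terms $3(g(X_i,Y_i)-\Delta)$, yields $\frac1n\sum_{i=1}^n\nu_i^2\xrightarrow{p}9\sigma_g^2$; the hypothesis $\sigma_g^2>0$ ensures this limit is strictly positive, while the stated moment condition on $g$ supplies the control needed for $\max_{1\le i\le n}|\nu_i|=o_p(\sqrt n)$.

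Finally I would carry out the standard empirical likelihood analysis. A routine argument shows the Lagrange multiplier solving (\ref{lambda2}) satisfies $\lambda=O_p(n^{-1/2})$, and a Taylor expansion of $-2\log R(0)=2\sum_{i=1}^n\log(1+\lambda\nu_i)$ gives
$$-2\log R(0)=\frac{\big(\frac{1}{\sqrt n}\sum_{i=1}^n\nu_i\big)^2}{\frac1n\sum_{i=1}^n\nu_i^2}+o_p(1).$$
Since the variance $9\sigma_g^2$ appearing in the numerator limit coincides with the probability limit of the denominator, the two factors cancel, and Slutsky's theorem gives $-2\log R(0)\xrightarrow{d}\chi^2_1$, completing the proof.
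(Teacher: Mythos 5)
Your proposal is correct, and at the strategic level it follows the same route as the paper: a CLT for $\frac1n\sum_i\nu_i$, a law of large numbers for $S=\frac1n\sum_i\nu_i^2$, and the standard Lagrange-multiplier/Taylor expansion giving $-2\log R(0)=n\widehat{\Delta}^2/S+o_p(1)$. The genuine difference lies in how the pseudo-values are controlled, and it is worth recording. The paper proves the \emph{exact} algebraic identity $\frac1n\sum_i\nu_i=\widehat{\Delta}$ (possible because the full-sample weights $1/\widehat p_k$ are kept inside each leave-one-out replicate), invokes Corollary \ref{cor1} for the numerator, and then simply asserts ``by the law of large numbers'' that $S$ converges, writing $S=4\sigma_0^2+O(1)$ alongside $\sqrt n\,\widehat{\Delta}/(2\sigma_0)\to N(0,1)$; the factors $2$ and $4$ there are inconsistent with Corollary \ref{cor1} itself, though they cancel in the final ratio. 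Your key lemma is different: the H\'ajek-type representation $\nu_i=\Delta+3\bigl(g(X_i,Y_i)-\Delta\bigr)+r_i$, which the paper never establishes. This is precisely what is needed to make the paper's LLN step rigorous, since the $\nu_i$ are \emph{not} independent and a law of large numbers for $S$ cannot be invoked without reducing them to i.i.d.\ leading terms. Your identification $\sigma_0^2=\mathbf{a}^T\Sigma\mathbf{a}=9\Var(g)=9\sigma_g^2$ is also correct, and it is the only place where the hypotheses $E[g(X,Y)]<\infty$ and $\sigma_g^2>0$ stated in the theorem actually enter; the paper states these hypotheses but never uses them in its proof. In short, the paper's route buys an exact identity for the mean of the pseudo-values at the price of an unjustified (and internally miscalibrated) second-moment step, while your route spends more effort on the projection and on the delta-method control of the re-estimated weights $1/\widehat p_k$, but in return delivers a self-consistent limiting variance and a complete justification of all three steps.
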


\begin{proof}
   Consider
\begin{eqnarray} \nonumber
    \sum_{i=1}^{n} \nu_{i} &= &\sum_{i=1}^{n} \left[n \left(\sum_{k=1}^{K}\frac{1}{\widehat{p}_k}\widehat{\Delta}_k-\frac{1}{3}\right) -(n-1) \left(\sum_{k=1}^{K}\frac{1}{\widehat{p}_k}\widehat{\Delta}^{i}_{k}-\frac{1}{3}\right)\right] \\ \label{eqn:jelpf1}
    &=& \sum_{i=1}^{n} \sum_{k=1}^{K} \frac{1}{\widehat{p_k}}\left(n \widehat{\Delta}_k - (n-1) \widehat{\Delta}^{i}_{k}\right) -n \frac{1}{3}.
\end{eqnarray}
Note that $\sum_{i=1}^{n} \widehat{\Delta}^{i}_{k} = n \widehat{\Delta}_k$. Thus, the equation \eqref{eqn:jelpf1} becomes
\begin{eqnarray} \nonumber
     \sum_{i=1}^{n} \nu_{i} &=& \sum_{i=1}^{n} \sum_{k=1}^{K} \frac{1}{\widehat{p_k}}\left(n \widehat{\Delta}_k - (n-1) \widehat{\Delta}^{i}_{k}\right) -n \frac{1}{3}\\ \nonumber
     &=&  \left(\sum_{k=1}^{K}\frac{1}{\widehat{p}_k}n\widehat{\Delta}_k-n\frac{1}{3}\right)  \\ \nonumber
     &=& n \widehat{\Delta}.
\end{eqnarray}
Hence, we have
$$\widehat{\Delta} = \frac{1}{n}\sum_{i=1}^{n} \nu_i. $$


By Corollary \ref{cor1}, under $H_0$, we have
\begin{equation} \label{eq:jel_l1}
    \frac{\sqrt{n}\widehat{\Delta}}{2 \sigma_0} \rightarrow N(0,1).
\end{equation}
Further, denote $S= \frac{1}{n} \sum_{i=1}^{n} \nu_i^2 $ and by law of large number
\begin{equation} \label{eq:jel_l2}
S = 4\sigma_0^2 + O(1).
\end{equation}In view of the results stated in (\ref{eq:jel_l1}) and (\ref{eq:jel_l2}),  by Slutsky's theorem, we have
\begin{equation} \label{eq:jel_l3}
    \frac{\sqrt{n}\widehat{\Delta}}{\sqrt{S}} \rightarrow N(0,1).
\end{equation}Denote
\[
g_n(\lambda)
= \frac{1}{n}\sum_{i=1}^n \frac{\nu_i}{1+\lambda^\top \nu_i} = 0.
\]

Expanding $g_n(\lambda)$ around $\lambda=0$ gives
\[
0 = g_n(0) - \Bigg( \frac{1}{n}\sum_{i=1}^n \nu_i^2 \Bigg)\lambda + r_n,
\]
where
\[
g_n(0) = \widehat{\Delta},
\qquad
r_n = o_p(n^{-1/2}).
\]
Hence
\begin{equation}\label{lambda}
  \lambda
= S^{-1}\,\widehat{\Delta} + o_p(n^{-1/2}).
\end{equation}
Now, consider
\[
\begin{aligned}
-2\log R(0)
&= 2 \sum_{i=1}^n \log\!\bigl(1+\lambda \nu_i\bigr) \\[6pt]
&= 2 \sum_{i=1}^n \left( \lambda \nu_i - \tfrac{1}{2} (\lambda \nu_i)^2 \right) + o_p(1) \\[6pt]
&= 2\,\lambda \sum_{i=1}^n \nu_i \;-\; \lambda^2 \sum_{i=1}^n \nu_i^2 + o_p(1) \\[6pt]
&= 2n\,\lambda \,\widehat{\Delta} \;-\; n\,\lambda^2 S^2 + o_p(1) .
\end{aligned}
\]
Using the expression (\ref{lambda}), the above given log jackknife empirical likelihood ratio  becomes
$$ -2 \log R(0) = \frac{n \widehat{\Delta}  ^2}{S} + o_p(1).$$
In the view of equation \eqref{eq:jel_l3}, as $n \rightarrow \infty$, we have
$$-2 \log R(0)  \xrightarrow{d} \chi^2_1.$$
\end{proof}
Using Theorem \ref{thm_jel}, we can determine the critical region of the  JEL ratio test. We  reject the null hypothesis $H_0$ against the alternative hypothesis $H_1$ at a significance level $\alpha$ if
 \begin{equation}
 	-2 \log R(0)> \chi^2_{1,\alpha},
 \end{equation}	
where $\chi^2_{1,\alpha}$ is the upper $\alpha$ percentile point of the $\chi^2$ distribution with one degree of freedom.

\section{Simulation Study}
In this section, we carried out a Monte Carlo simulation study to assess the finite sample performance of the proposed JEL ratio test. The simulation is done ten thousand times using different sample sizes: 20, 40, 60, 80, and 100. Simulation is done using R software.

We carried out the simulation study in two parts. First, we determine the empirical type I error by considering a continuous random variable $X$ following a lognormal distribution with parameters $(\mu,\sigma)$ and $Y$, a categorical random variable with a discrete uniform distribution having six categories $(K=6)$. We independently generated the continuous random variable $X$ and the categorical random variable $Y$ to determine the empirical type I error of the JEL ratio test. In this case, we examined two different scenarios by varying the parameters of the assumed distribution of $X$. The results from the simulation study are presented in Table \ref{tab:typ1}. Additionally, we compared the empirical Type 1 error of the proposed test with some other tests: the mean-variance (MV) index test (Cui and Zhong, 2019), the integral Pearson chi-square (IPC) test (Ma et al., 2023), and the semi-distance (SD) test (Zhong et al., 2024). The MV and SD tests are available in the R package `semidist'. In the comparison, we assume that $X$ follows a lognormal distribution with parameters $(\mu,\sigma) = (0,1)$, while the categorical variable $Y$ follows a uniform distribution with six categories $(K=6)$. The comparison of empirical type I error is presented in Table \ref{tab:typ1com}. Table \ref{tab:typ1com} shows that as the sample size $n$ increases, the empirical type I error for all the tests approaches the significance level $\alpha$.
\begin{table}[]
\caption{Empirical type I error of JEL ratio test at 5\% significance level: $X$ has log normal distribution $(\mu,\sigma)$. }
    \centering
    \begin{tabular} {cccccccccc}
    \hline
   $n$ & $(\mu,\sigma) =(0,1)$ & $(\mu,\sigma) =(0,2)$\\
\hline

   20 & 0.020 &0.083\\
    40 & 0.041  &0.071\\
    60 & 0.065  &0.064\\
    80 & 0.061 &0.061\\
    100 & 0.051  &0.052\\
    \hline
    \end{tabular}
    \label{tab:typ1}
\end{table}

\begin{table}[]
\caption{Empirical type 1 error of the tests at 5\%  significance level.}
    \centering
    \begin{tabular} {cccccccccc}
    \hline
   $n$ & JEL  & IPC & MV & SD  \\
\hline

   20 & 0.031 &0.075&0.060&0.040 \\
    40 & 0.041  &0.055& 0.040&0.045\\
    60 & 0.043  &0.060&0.031&0.060\\
    80 & 0.059 &0.062&0.040&0.060\\
    100 & 0.051  &0.061&0.045&0.053\\

    \hline
    \end{tabular}

    \label{tab:typ1com}
\end{table}


In the second part of the simulation, we evaluate the empirical power of the JEL ratio test and compare it to the MV, IPC, and SD tests. In the simulation, we generate random samples from mixtures of normal, exponential, and lognormal distributions. We consider $K = 3$, as in Hewage and Sang (2024). Further, for each scenario, $\mathbf{p} =(p_1,p_2,p_3) $ is varied from balanced $\mathbf{p}=(1/3,1/3,1/3)$ to lightly unbalanced $\mathbf{p}=(5/12,4/12,3/12)$ to highly unbalanced $\mathbf{p}=(6/10,3/10,1/10)$. Therefore, we consider the following three scenarios:
\begin{itemize}
    \item Scenario 1: Balanced
    $$X \sim \frac{1}{3}\, \mathcal{N}(0,1) +\frac{1}{3}\, Exp(1) +\frac{1}{3}\, Lognormal(0,1)$$

\item Scenario 2: Lightly unbalanced
$$X \sim \frac{5}{12}\, \mathcal{N}(0,1) + \frac{4}{12}\, Exp(1) + \frac{3}{12} \,Lognormal(0,1)$$

\item Scenario 3: Highly unbalanced
    $$X \sim \frac{6}{10}\, \mathcal{N}(0,1) + \frac{3}{10}\, Exp(1) +\frac{1}{10}\, Lognormal(0,1).$$

\end{itemize}
The results for the power comparison are given in Tables \ref{tab:sc1}, \ref{tab:sc2}, and \ref{tab:sc3}. From these tables, we see that, in all scenarios, the empirical power of the test converges to one as the sample size increases. From Table \ref{tab:sc1}, we can observe that for Scenario 1, the power is high for even small samples $(n=20)$. We also observe that the proposed test performs better than the IPC, MV, and SD tests even when the sample size is smaller. In Scenario 2 given in Table \ref{tab:sc2}, we can see that SD test is performing better for small sample size. However, the power is comparable for larger sample size. In Scenario 3 as given in Table \ref{tab:sc3}, we can observe that our test performs better than all the other test when the sample is 40 or more.

\begin{table}[]
\caption{Empirical power of the tests at 5\%  significance level in Scenario 1}
    \centering
    \begin{tabular} {cccccccccc}
    \hline
   $n$ & Proposed test  & IPC & MV & SD  \\
\hline

   20 & 1.000 &0.530&0.015&0.763 \\
    40 & 1.000  &0.751& 0.251&0.971\\
    60 & 1.000  &0.989&0.486&0.998\\
    80 & 1.000 &0.999&0.733&1.000\\
    100 & 1.000  &1.000&0.881&1.000\\

    \hline
    \end{tabular}

    \label{tab:sc1}
\end{table}

\begin{table}[]
\caption{Empirical power of the tests at 5\%  significance level in Scenario 2}
    \centering
    \begin{tabular} {cccccccccc}
    \hline
   $n$ & Proposed test  & IPC & MV & SD \\
\hline

   20 & 0.511&0.502&0.012& 0.706\\
    40 & 0.781&0.886&0.182&0.952\\
    60 & 0.987&0.986&0.456&0.996\\
    80 & 1.000 &0.999&0.781&1.000\\
    100 & 1.000  &1.000&0.866&1.000\\

    \hline
    \end{tabular}

    \label{tab:sc2}
\end{table}

\begin{table}[]
\caption{Empirical power of the tests at 5\%  significance level in Scenario 3}
    \centering
    \begin{tabular} {cccccccccc}
    \hline
   $n$  & Proposed test & IPC & MV & SD \\
\hline

   20 & 0.565&0.453&0.029 & 0.568 \\
    40 & 0.980&0.756&0.0631 & 0.868\\
    60 & 0.999&0.950&0.234& 0.977\\
    80 & 1.000 &0.994&0.469&0.999\\
    100 & 1.000  &0.999&0.689&1.000\\

    \hline
    \end{tabular}

    \label{tab:sc3}
\end{table}

 Next, we report the computation time required to calculate the value of the test statistics, and it is given in Table \ref{tab:comp}. The simulation uses a laptop with an Intel (R) Core(TM) i7-8550U CPU \@ 1.80 GHz with 16 GB RAM. It is evident that the proposed methods offer several benefits but also have certain trade-offs. Here we observe that when the sample size is large, JEL-based methods demand more computational time than the other tests, as they involve resampling techniques. However, since the test is distribution-free, the proposed test provides greater flexibility, robustness, and inferential accuracy, which is demonstrated in terms of its higher power while dealing with different scenarios. Also, note that in small sample sizes, the computational times of all tests considered here are comparable.

\begin{table}
	\centering
	\caption{Time taken (in second)  for finding the values of different test statistics}
	\begin{tabular}{lccccccccc} \hline
		  $n$ & Proposed test  & IPC & MV & SD\\ \hline
20&0.041&0.025&0.010 &0.009\\	
40&0.051&0.028&0.021&0.016\\
60&2.362&0.034&0.032&0.028\\
80&6.968&0.048&0.043&0.035\\
100&16.382&0.059&0.055&0.042\\
\hline
\end{tabular}
    \label{tab:comp}
\end{table}

\section{Data Analysis}
To illustrate the application of the proposed JEL ratio test, we apply it to two real data sets. The first dataset used for illustration is the well-known Iris built-in data set in R. This data set comprises measurements in centimeters of sepal length, sepal width, petal length, and petal width for 150 iris flowers, with 50 samples each from three species: setosa, versicolor, and virginica. In this study, we focus on the sepal length measurements from the Iris data set to illustrate the application of the proposed test.

It is reasonable to assume that a flower's species affects sepal length. To provide clarity, Figure \ref{fig:den} presents density plots of sepal length for each species. Figure \ref{fig:den} illustrates that each species exhibits a distinct distribution of sepal lengths, suggesting a potential dependence of sepal length on species. To validate this inference, we applied the proposed hypothesis test. The calculated value of the JEL ratio is $-2 \log R(\Delta) = 6.70$ with a p-value of 0.013.  The result suggests that we can reject the null hypothesis at 5\% significance level. Thus, as per the test, we can conclude that the sepal length of the iris flower is dependent on species. The R code used for the data analysis is provided as supplementary material.

\begin{figure}[htbp]
\centering
\caption{Density plot of Sepal Length}
\vspace{0.2in}

\includegraphics[width=10 cm]{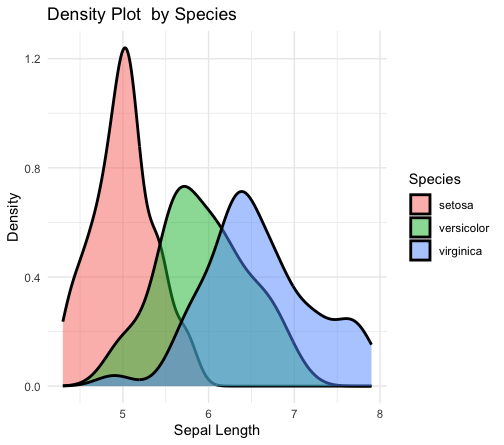}

\label{fig:den}

\end{figure}

The second data set analyzed is the gilgai survey data (Jiang et al., 2011). The dataset comprises 365 samples obtained from depths of 0-10 cm, 30-40 cm, and 80-90 cm beneath the Earth's surface. The dataset has three variables: pH, EC (expressed in mS/cm), and CC (expressed in ppm), all derived from a 1:5 soil-to-water extract of each sample. In the analysis, we use the depths as the categorical random variable $Y$, which takes on three values: 1, 2, and 3, corresponding to depths of 0-10 cm, 30-40 cm, and 80-90 cm, respectively. We regard the pH value as the continuous random variable $X$. Figure \ref{fig:deng} illustrates the density curves for pH values at three different depth levels. Figure \ref{fig:deng} indicates a possible association between the sample's pH level and depth. To validate this observation, we implemented the recommended test on the data. The computed value of the JEL ratio is $-2 \log R(\Delta) = 101.61$ with $p-$ value 0.00. The findings suggest that we can dismiss the null hypothesis at both the 5\% and 1\% significance levels. Consequently, we can ascertain that the pH level of the soil sample is dependent upon the depth.

\begin{figure}[htbp]
\centering
\caption{Density plot of pH level of the Soil}
\vspace{0.2in}

\includegraphics[width=10 cm]{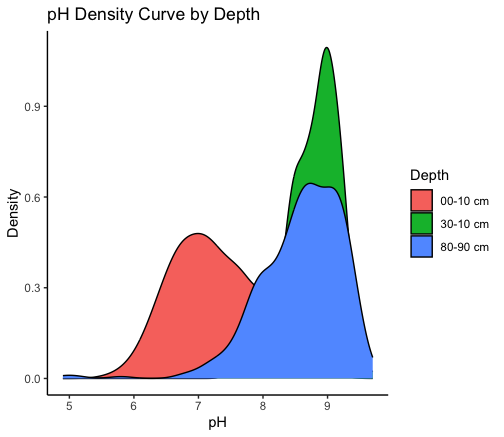}

\label{fig:deng}

\end{figure}

\section{Conclusion}
  Testing the independence between a continuous variable ($X$) and a discrete variable ($Y$) has many practical applications in various fields, such as economics, social sciences, environmental science, medical science, etc. Testing independence between continuous and discrete variables helps simplify models, improve accuracy, and ensure better interpretability in machine learning applications. For instance, a Gaussian mixture model can effectively model data when a continuous variable exhibits different distributions among different categories. Therefore, it is desirable to develop a test for independence when $X$ is continuous and $Y$ is discrete.

  Motivated by this, we proposed a $U$-statistics-based test to assess the independence between a categorical variable and a continuous variable. Given the difficulty of implementing the $U$-statistics test, we developed a JEL ratio test as an alternative. We compared our testing methodology with recently developed tests for independence, and the results indicate that it is a viable competitor to the existing methods. We demonstrated the testing methodology on two data sets, including the well-known Iris data set. The analysis of the Iris dataset revealed that the test identifies a dependence between the continuous variable, sepal length, and the categorical variable, flower species, highlighting its potential application in classification problems. This proposed test can be used for variable selection in discriminant analysis or generalized linear modeling (Cui et al., 2015).

 Various forms of censoring and truncation may occur in lifetime data analysis. A test for independence between a categorical and a continuous variable may be developed if a continuous variable is subjected to right censoring. The $U$-statistics for censored data can be employed to modify the proposed test procedure (Datta et al., 2010). Another possible direction for future research is to test independence in cases of left-truncated and right-censored data. (Sudheesh et al. 2023). Recently, Fan et al. (2024) proposes a test for independence between  two continuous random variables $X$ and $Y$  when there is measurement error in the variable $X$. A possible extension in our work can explored in the same direction as in Fan et al. (2024).

\section*{Acknowledgements}We thank the anonymous reviewers for their constructive suggestions, which helped us to improve the earlier version of this manuscript.


\vspace{-0.2in}

\section*{Appendix}
\subsection*{ Derivation of the variance term}
\begin{eqnarray}\nonumber
     \sigma_{kk}&=& \Cov\Big[\big(I(\min(X_1,X_2)>X_3, Y_1=Y_2=k)+I(\min(X_2,X_3)>X_1, Y_2=Y_3=k)\\ \nonumber
&+&I(\min(X_1,X_3)>X_2, Y_1=Y_3=k)\big),\big(I(\min(X_1,X_4)>X_5, Y_1=Y_4=k)\\ \nonumber
&+&I(\min(X_4,X_5)>X_1, Y_4=Y_5=k)+I(\min(X_1,X_5)>X_4, Y_1=Y_5=k) \big)\Big]\\ \nonumber
 &=& 3 \Cov[I(\min(X_1,X_2)>X_3, Y_1=Y_2=k),I(\min(X_1,X_4)>X_5, Y_1=Y_4=k)]\\ \nonumber
 &+& 5\Cov[I(\min(X_1,X_2)>X_3, Y_1=Y_2=k),I(\min(X_4,X_5)>X_1, Y_4=Y_5=k)]\\ \nonumber
 &+& \Cov[I(\min(X_2,X_3)>X_1, Y_2=Y_3=k),I(\min(X_4,X_5)>X_1, Y_4=Y_5=k)]\\ \label{sigkk}
 &=& 3A+5B+C.
\end{eqnarray}
 Consider the term $A$ given in equation \eqref{sigkk}.
 \begin{eqnarray}\nonumber
     A &=&  \Cov[I(\min(X_1,X_2)>X_3, Y_1=Y_2=k),I(\min(X_1,X_4)>X_5, Y_1=Y_4=k)]\\ \nonumber
     &=& E[I(\min(X_1,X_2)>X_3, Y_1=Y_2=k).I(\min(X_1,X_4)>X_5, Y_1=Y_4=k)] \\ \nonumber
     &-& P(\min (X_1,X_2)>X_3, Y_1=Y_2 =k) \times P(\min(X_1,X_4)>X_5, Y_1=Y_4=k)\\ \nonumber
     &=& a_1 - a_2.
 \end{eqnarray}
Under the null hypothesis we have,
 \begin{eqnarray} \nonumber
     a_1 &=& E(I(\min(X_1,X_2)>X_3, Y_1= k,Y_2=k).I(\min(X_1,X_4)>X_5, Y_1=k, Y_4=k)) \nonumber\\&=& E(I(\min(X_1,X_2)>X_3, Y_1 = k,Y_2=Y_4=k, \min(X_1,X_4)>X_5) )\nonumber\\&=&\nonumber \sum_{y=1}^K   \int_{0}^{\infty}  E(I(\min(x,X_2)>X_3, y = k,Y_2=Y_4=k,\min(x,X_4)>X_5))\\\nonumber
    &&\quad\quad\quad\quad\quad\quad\quad\quad\quad\quad\quad\quad\quad\quad\quad P(Y_1=y)dF(x).\nonumber\\\nonumber
    \end{eqnarray}Hence
     \begin{eqnarray} \nonumber
     a_1
    &=& \sum_{y=1}^K  \int_{0}^{\infty}  I(y=k) p_k^2 P(\min(x,X_2)>X_3)^2  P(Y_1=y)dF(x)\\ \nonumber
    &=& p_k^3  \int_{0}^{\infty}  P(\min(x,X_2)>X_3)^2 dF(x) \\ \nonumber
    &=& p_k^3 \ \int_{0}^{\infty}  \int_{0}^{\infty}   P(\min(x,z)>X_3)^2 dF(z)dF(x) \\ \nonumber
    &=&   p_k^3 \int_{0}^{\infty} \left[ \int_0^x P(z>X_3)^2 dF(z)+ \int_x^{\infty} P(x>X_3)^2 dF(z) \right]dF(x) \\\nonumber
    &=&  p_k^3 \int_{0}^{\infty} \left[ \int_0^x F(z)^2 dF(z)+ \int_x^{\infty} F(x)^2 dF(z) \right]dF(x)\\ \nonumber
    &=&   p_k^3 \int_{0}^{\infty} \left[ \frac{F(x)^3}{3}+ F(x)^2 (1-{F}(x)) \right]dF(x)\\ \label{eq:terma1}
    &=& \frac{p_k^3}{6}.
 \end{eqnarray}
Again, under $H_0$, we have
\begin{eqnarray} \nonumber
    a_2&=& P(\min(X_1,X_2)>X_3, Y_1=Y_2=k)^2\nonumber\\\nonumber &=& \left(p_k^2 P(\min(X_1,X_2) > X_3)\right)^2\\ \label{eq:terma2}
     &=& \frac{p_k^4}{9}.
     \end{eqnarray}
   Thus, from equation \eqref{eq:terma1} and \eqref{eq:terma2}, we have
 \begin{equation}\label{termA:sigkk}
     A = \frac{p_k^3}{6}-\frac{p_k^4}{9}.
 \end{equation}

 Consider the term $B$ in equation \eqref{sigkk},
 \begin{eqnarray} \nonumber
     B &=& \Cov[I(\min(X_1,X_2)>X_3, Y_1=Y_2=k),I(\min(X_4,X_5)>X_1, Y_4=Y_5=k)]\\ \nonumber
     &=& E[I(\min(X_1,X_2)>X_3, Y_1=Y_2=k).I(\min(X_4,X_5)>X_1, Y_4=Y_5=k)]\\ \nonumber
     &-& P(\min(X_1,X_2) >X_3, Y_1=Y_2 =k) \times P(\min(X_4,X_5) >X_1, Y_4=Y_5 =k)\\ \nonumber
     &=& b_1 -b_2.
\end{eqnarray}

Consider the term $b_1$, under the null hypothesis
\begin{eqnarray} \nonumber
    b_1 &=& E[I(\min(X_1,X_2)>X_3, Y_1=Y_2=k).I(\min(X_4,X_5)>X_1, Y_4=Y_5=k)]\\ \nonumber
    &=&  \sum_{y=1}^{K}\int_{0}^{\infty}  E((\min(x, X_2)>X_3, \min(X_4,X_5) >x,
    Y_2=Y_4=Y_5 =y,y=k)\\\nonumber&&\qquad \qquad \qquad \qquad \qquad \qquad \qquad \qquad \qquad \qquad \qquad \qquad  P(Y_1 =y) dF(x)\\ \nonumber
     &=& p^4_k \int_{0}^{\infty} P(\min(x,X_2)>X_3) \overline{F}(x)^2 dF(x) \\ \nonumber
     &=& p^4_k \int_{0}^{\infty} \int_{0}^{\infty}P(\min(x,z)>X_3) \overline{F}(x)^2 dF(z) dF(x)\\ \nonumber
     &=& p_k^4 \int_{0}^{\infty} \Bigg[ \int_{0}^{x
     } P(\min(x,z)>X_3) \overline{F}(x)^2dF(z) \\\nonumber &&\qquad \qquad \qquad \qquad +\int_{x}^{\infty
     } P(\min(x,z)>X_3) \overline{F}(x)^2 dF(z)\Bigg] dF(x) \\ \nonumber
     &=&  p_k^4 \int_{0}^{\infty} \Bigg[ \int_{0}^{x
     } F(z) \overline{F}(x)^2dF(z) +\int_{x}^{\infty
     } F(x) \overline{F}(x)^2 dF(z)\Bigg] dF(x) \\ \nonumber
     &=&  p_k^4  \left[ \int_{0}^{\infty} \frac{F(x)^2 }{2}\overline{F}(x)^2dF(x) +\int_{0}^{\infty
     } F(x) \overline{F}(x)^3 dF(x)\right] \\ \nonumber
     &=& \frac{p_k^4}{2} \int_{0}^{\infty} (1+ F(x)^2 -2F(x) ) F(x)^2 dF(x) +  p_k^2 \int_{0}^{\infty} F(x) \overline{F}(x)^3 dF(x) \\ \nonumber
     &=& \frac{p_k^4}{2} \int_{0}^{\infty} (F(x)^2+ F(x)^4 -2F(x)^3 )  dF(x) +  p_k^2 \int_{0}^{\infty}  (\overline{F}(x)^3 -  \overline{F}(x)^4)dF(x) \\  \nonumber
     &=& \frac{p_k^4}{2} \left[ 1/3 +1/5 -2/4\right] +p_k^4 \left[ 1/4- 1/5\right]\\ \label{eq:termb1}
     & =&  \frac{p_k^4 }{15}.
\end{eqnarray}
Note that the term $b_2$ is similar to $a_2$. Thus, we have
\begin{equation} \label{eq:termb2}
    b_2 = \frac{p_k^4}{9}.
\end{equation}
 From equation \eqref{eq:termb1} and \eqref{eq:termb2} we have,
 \begin{eqnarray} \nonumber
   B &=& b_1 -b_2\\ \nonumber
   & =&  \frac{p_k^4 }{15} - \frac{p_k^4}{9}\\ \label{eq:termb}
   & =& - \frac{2}{45}p_k^4.
 \end{eqnarray}

  Consider the term $C$ in equation \eqref{sigkk},
\begin{eqnarray} \nonumber
    C &=& \Cov[I(\min(X_2,X_3)>X_1, Y_2=Y_3=k),I(\min(X_4,X_5)>X_1, Y_4=Y_5=k)]\\ \nonumber
    &=&  E[I(\min(X_2,X_3)>X_1, Y_2=Y_3=k),I(\min(X_4,X_5)>X_1, Y_4=Y_5=k)] \\ \nonumber
    &-& P(\min(X_2,X_3) > X_1 , Y_2=Y_3 =k)^2 \\ \nonumber
    &=& c_1 -c_2.
\end{eqnarray}

Consider the term $c_1$, under the null hypothesis
\begin{eqnarray} \nonumber
    c_1 &=&E[I(\min(X_2,X_3)>X_1, Y_2=Y_3=k),I(\min(X_4,X_5)>X_1, Y_4=Y_5=k)] \\ \nonumber
     &=&  \int_{0}^{\infty} P(\min(X_2,X_3) > x, Y_2 =Y_3 =k) P(\min(X_4,X_5) > x , Y_4 =Y_5 =k ) dF(x).
 \nonumber\\\nonumber
  &=& \int_{0}^{\infty}  p_k^4 P(\min(X_2,X_3) >x ) P(\min(X_4, X_5)>x) dF(x) \\ \nonumber
 &=&  \int_{0}^{\infty}  p_k^4 \overline{F}(x) ^4 dF(x) \\ \nonumber
    & =&  \int_{0}^ {\infty} p_k^4 \overline{F}(x) ^4  dF(x) \\ \label{eq:termc1}
    &=& \frac{p_k^4}{5}.
\end{eqnarray}

Note that term $c_2$ is similar to $a_2$, and hence, we have,
\begin{equation} \label{eq:termc}
    C = \frac{p^4_k}{5}-\frac{p_k^9}{9} = \frac{4}{45} p_k^4.
\end{equation}
From equation \eqref{termA:sigkk}, \eqref{eq:termb} and \eqref{eq:termc}, we have
\begin{equation} \label{eq:sigkk_fin}
    \sigma_{kk} = \frac{1}{2}p_k^3 - \frac{7}{15}p_k^4.
\end{equation}

Now, let us calculate the value of $\sigma_{kl}$:
\begin{eqnarray}\nonumber
\sigma_{kl}&=& \Cov[I(\min(X_1,X_2)>X_3, Y_1=Y_2=k)+I(\min(X_2,X_3)>X_1, Y_2=Y_3=)\\ \nonumber
&+&I(\min(X_1,X_3)>X_2, Y_1=Y_3=k),I(\min(X_1,X_4)>X_5, Y_1=Y_4=k)\\ \nonumber
&+&I(\min(X_4,X_5)>X_1, Y_4=Y_5=k)+I(\min(X_1,X_5)>X_4, Y_1=Y_5=k) ]\\ \nonumber
 &=& 3 \Cov[I(\min(X_1,X_2)>X_3, Y_1=Y_2=k),I(\min(X_1,X_4)>X_5, Y_1=Y_4=k)]\\ \nonumber
 &+& 5\Cov[I(\min(X_1,X_2)>X_3, Y_1=Y_2=k),I(\min(X_4,X_5)>X_1, Y_4=Y_5=k)]\\ \nonumber
 &+& \Cov[I(\min(X_2,X_3)>X_1, Y_2=Y_3=k),I(\min(X_4,X_5)>X_1, Y_4=Y_5=k)]\\ \label{sigkl}
 &=& 3A^*+5B^*+C^*.
\end{eqnarray}

Similar to the calculation of $\sigma_{kk}$, under $H_0$,  we show that term $A^*$ becomes
\begin{eqnarray} \nonumber
    A^{*} &=& \Cov[I(\min(X_1,X_2)>X_3, Y_1=Y_2=k),I(\min(X_1,X_4)>X_5, Y_1=Y_4=k)]\\ \label{eq:terma_s}
    &=& -\frac{p_k^2p_l^2}{9}.
\end{eqnarray}

Now, the term $B^{*}$ in equation \eqref{eq:sigkl} under $H_0$ becomes
\begin{eqnarray} \nonumber
    B^{*} &=& \Cov[I(\min(X_1,X_2)>X_3, Y_1=Y_2=k),I(\min(X_4,X_5)>X_1, Y_4=Y_5=k)] \\ \nonumber
    &=& \frac{p_k^2p_l^2}{15} - \frac{p_k^2p_l^2}{9} \\ \label{eq:termb_s}
    &=& -\frac{2}{45}p_k^2p_l^2.
\end{eqnarray}
And, the term $C^{*}$ in equation \eqref{eq:sigkl} under $H_0$ becomes
\begin{eqnarray} \nonumber
    C^{*} &=& \Cov[I(\min(X_2,X_3)>X_1, Y_2=Y_3=k),I(\min(X_4,X_5)>X_1, Y_4=Y_5=k)] \\ \nonumber
    &=& \frac{p_k^2p_l^2}{5} - \frac{p_k^2p_l^2}{9}\\ \label{eq:termc_s}
    &=& \frac{4}{45}p_k^2p_l^2.
\end{eqnarray}
Thus from equation \eqref{eq:terma_s}, \eqref{eq:termb_s} and \eqref{eq:termc_s}, we obtain
\begin{equation*}
    \sigma_{kl} = -\frac{7}{15} p_k^{2}p_l^{2}.
\end{equation*}
\end{document}